\newif\ifarxiv\arxivtrue
\newtheorem{definition}{Definition}
\newtheorem{example}{Example}
\newtheorem{theorem}{Theorem}
\definecolor{rwth-blue}{cmyk}{1,.5,0,0}\colorlet{rwth-lblue}{rwth-blue!50}\colorlet{rwth-llblue}{rwth-blue!25}
\definecolor{rwth-violet}{cmyk}{.6,.6,0,0}\colorlet{rwth-lviolet}{rwth-violet!50}\colorlet{rwth-llviolet}{rwth-violet!25}
\definecolor{rwth-purple}{cmyk}{.7,1,.35,.15}\colorlet{rwth-lpurple}{rwth-purple!50}\colorlet{rwth-llpurple}{rwth-purple!25}
\definecolor{rwth-carmine}{cmyk}{.25,1,.7,.2}\colorlet{rwth-lcarmine}{rwth-carmine!50}\colorlet{rwth-llcarmine}{rwth-carmine!25}
\definecolor{rwth-red}{cmyk}{.15,1,1,0}\colorlet{rwth-lred}{rwth-red!50}\colorlet{rwth-llred}{rwth-red!25}
\definecolor{rwth-magenta}{cmyk}{0,1,.25,0}\colorlet{rwth-lmagenta}{rwth-magenta!50}\colorlet{rwth-llmagenta}{rwth-magenta!25}
\definecolor{rwth-orange}{cmyk}{0,.4,1,0}\colorlet{rwth-lorange}{rwth-orange!50}\colorlet{rwth-llorange}{rwth-orange!25}
\definecolor{rwth-yellow}{cmyk}{0,0,1,0}\colorlet{rwth-lyellow}{rwth-yellow!50}\colorlet{rwth-llyellow}{rwth-yellow!25}
\definecolor{rwth-grass}{cmyk}{.35,0,1,0}\colorlet{rwth-lgrass}{rwth-grass!50}\colorlet{rwth-llgrass}{rwth-grass!25}
\definecolor{rwth-green}{cmyk}{.7,0,1,0}\colorlet{rwth-lgreen}{rwth-green!50}\colorlet{rwth-llgreen}{rwth-green!25}
\definecolor{rwth-cyan}{cmyk}{1,0,.4,0}\colorlet{rwth-lcyan}{rwth-cyan!50}\colorlet{rwth-llcyan}{rwth-cyan!25}
\definecolor{rwth-teal}{cmyk}{1,.3,.5,.3}\colorlet{rwth-lteal}{rwth-teal!50}\colorlet{rwth-llteal}{rwth-teal!25}
\definecolor{rwth-gold}{cmyk}{.35,.46,.7,.35}
\definecolor{rwth-silver}{cmyk}{.39,.31,.32,.14}
\definecolor{chord-sat}{RGB}{128,133,233}
\definecolor{chord-unsat}{RGB}{241,92,128}
\definecolor{chord-memout}{RGB}{124,181,236}
\definecolor{chord-segfault}{RGB}{67,67,72}
\definecolor{chord-timeout}{RGB}{144,237,125}
\definecolor{chord-unknown}{RGB}{247,163,92}
\tikzset{
  term/.style={
              draw, 
              rectangle, 
              rectangle split, 
              rectangle split parts=2,
              rectangle split ignore empty parts=false, 
              rectangle split horizontal,
              append after command={%
		\pgfextra{\let\mainnode=\tikzlastnode 
                \coordinate (c1 \mainnode) at ($(\mainnode.west)!.5!(\mainnode.one split)$);
                \coordinate (c2 \mainnode) at ($(\mainnode.one split)!.5!(\mainnode.east)$);
		  }
                }}}
\newcommand{\pgfplotsdrawaxis}{\pgfplots@draw@axis}
\pgfplotsset{only axis on top/.style={axis on top=false, after end axis/.code={
             \pgfplotsset{axis line style=opaque, ticklabel style=opaque, tick style=opaque,
                          grid=none}\pgfplotsdrawaxis}}}
\pgfplotsset{compat = newest}
\newcommand{\rat}{\texttt{SMT-RAT}}
\newcommand{\calc}{\texttt{CAlC}}
\newcommand{\calci}{\texttt{CAlC-I}}
\newcommand{\calcih}{\texttt{CAlC-IH}}
\newcommand{\True}{\texttt{True}}
\newcommand{\False}{\texttt{False}}
\newcommand{\QQ}{\ensuremath{{\mathbb{Q}}}}
\newcommand{\RR}{\ensuremath{{\mathbb{R}}}}
\newcommand{\N}{\ensuremath{{\mathbb{N}}}}
\newcommand{\NN}{\ensuremath{{\N}_{>0}}}
\newcommand{\I}{\ensuremath{\mathbb{I}}}
\renewcommand{\phi}{\ensuremath{\varphi}}
\renewcommand{\epsilon}{\ensuremath{\varepsilon}}
\renewcommand{\theta}{\ensuremath{\vartheta}}
\renewcommand{\d}{\Delta}
\newcommand{\qfnra}{{\small QFNRA}\xspace}
\newcommand{\mathfont}[1]{\mathit{#1}}
\newcommand{\interior}[1]{\mathfont{int}(#1)}
\newcommand{\boundary}[1]{\mathfont{bound}(#1)}
\renewcommand{\approx}[1]{\underline{#1}}
\newcommand{\closure}{\mathfont{cl}}
\newcommand{\sgn}{\mathfont{sgn}}
\newcommand{\project}[2]{#1\!\downarrow_{#2}}
\newcommand{\proj}{\mathfont{proj}}
\newcommand{\sep}{\ensuremath{~\mid~}}
\newcommand{\res}{\ensuremath{\mathfont{res}}}
\newcommand{\disc}{\ensuremath{\mathfont{disc}}}
\newcommand{\ball}[2]{\ensuremath{B_{#1}(s)}}
\newcommand{\sample}{\textit{sample}}
\begin{document}

\copyrightyear{2023}
\copyrightclause{Copyright for this paper by its authors.
  Use permitted under Creative Commons License Attribution 4.0
  International (CC BY 4.0).}

\conference{SMT'23: $\text{21}^\text{st}$ International Workshop on Satisfiability Modulo Theories,
  July 05--06, 2023, Rome, Italy}

\title{Exploiting Strict Constraints in the\\Cylindrical Algebraic Covering}

\author[1]{Philipp Bär}[%
email=philipp.baer@rwth-aachen.de,
]
\address[1]{RWTH Aachen University, Germany}
\address[2]{United States Naval Academy, USA}

\author[1]{Jasper Nalbach}[%
email=nalbach@cs.rwth-aachen.de,
orcid=0000-0002-2641-1380
]
\cormark[1]

\author[1]{Erika {\'A}brah{\'a}m}[%
email=abraham@cs.rwth-aachen.de,
orcid=0000-0002-5647-6134
]

\author[2]{Christopher W. Brown}[%
email=wcbrown@usna.edu
]

\cortext[1]{Corresponding author.}

\begin{abstract}
  One of the few available complete methods for checking the satisfiability of sets of polynomial constraints over the reals is the  \emph{cylindrical algebraic covering (CAlC)} method. In this paper, we propose an extension for this method to exploit the \emph{strictness} of input constraints for reducing the computational effort. 
  We illustrate the concepts on a multidimensional example and provide experimental results to evaluate the usefulness of our proposed extension.
\end{abstract}

\begin{keywords}
  Satisfiability checking \sep
  SMT solving \sep
  Real algebra \sep
  Cylindrical algebraic covering \sep
  Strict constraints
\end{keywords}

\maketitle

\section{Introduction}
\label{sec:introduction}

\emph{Quantifier-free (non-linear) real-algebraic (\qfnra)} formulas are Boolean combinations of polynomial constraints over the reals. Efficiently determining the \emph{satisfiability} of such formulas has become increasingly relevant in the last decades. A relatively recent general methodology, which has been proven very successful, encodes various properties of real-world systems by such formulas, e.g. the correctness of programs \cite{fm18,cbmc,software_mc,cimatti:smt-based} or security issues \cite{snelting}. To check whether the encoded properties hold or not, we need suitable software tools that can check the satisfiability of these encodings in a fully automated manner. 

Besides general-purpose computer algebra systems, this functionality is offered by some 
dedicated \emph{satisfiability modulo theories (SMT) solvers}. These tools use SAT solving to identify \emph{sets of polynomial constraints} whose truth satisfies the Boolean structure of an input formula, and require a \emph{theory solver} that can check the satisfiability of such constraint sets.

There are a few, even though not many, algorithms available that can be implemented to obtain such a theory solver. Incomplete methods like interval constraint propagation \cite{icp}, virtual substitution \cite{vs}, and subtropical satisfiability \cite{subtropical} are complemented by the complete method of the \emph{cylindrical algebraic decomposition (CAD)} \cite{collins} and a recent adaption of it named the \emph{cylindrical algebraic covering (CAlC)} \cite{abraham}. Employing CAlC as a theory solver in traditional SMT solving clearly improves over previous solving approaches regarding computational effort, as demonstrated by the cvc5 solver \cite{cvc5}, which has won the 2022 SMT competition \cite{smtcomp22} in the category of \qfnra using the CAlC algorithm.

The \qfnra satisfiability problem is decidable \cite{tarski} but hard. Both CAD and CAlC might need doubly exponential effort in the worst case \cite{DavenportHeintz1988} and there are currently no cheaper alternatives, even though \qfnra is known to be solvable in singly exponential time \cite{BrownDavenport2007}. Thus \emph{optimizations} play an important role to improve applicability and practical relevance.

In this paper we propose such an optimization for the CAlC algorithm: we exploit knowledge about the \emph{strictness of input constraints} in order to reduce the computational effort. For constraints in $n$ variables, the CAlC method iteratively computes subsets of $\RR^n$ over which the input constraint set is invariantly unsatisfiable, until either it finds a solution or the computed subsets cover $\RR^n$. Such a covering is composed of a finite number of open and non-open subsets, where each non-open one lies in the boundary of an open one. Previous work like e.g. \cite{brown} already exploited the well-known fact that if \emph{all} input constraints are strict then only the \emph{open} subsets need to be considered: if there is a solution in a non-open subset $N$ then there is also an $\epsilon$-ball of solutions around it, therefore also the open subset whose boundary contains $N$ will contain solutions. Skipping non-open subsets does not simply reduce the number of cases, but it avoids the hard cases, which typically require effortful computations with non-rational real-algebraic numbers. In this work, we go further and show that we can neglect some of the non-open subsets even if \emph{not all} (but some) of the input constraints are strict. This paper has three main contributions:
\begin{enumerate}
    \item We formalize the above-mentioned optimization for the CAlC method, to save effort in the presence of strict input constraints. Since a decomposition is a special type of covering, our results are transferable to the CAD method.
    \item We provide a publicly available implementation of the proposed optimization.
    \item We evaluate this implementation to evaluate and compare the modification to the original CAlC method.
\end{enumerate}

\noindent\emph{Outline:}\quad
We introduce the CAD and the CAlC methods in \Cref{sec:preliminaries} before we present our optimization in \Cref{sec:main}. We discuss the implementation in the \rat{} toolbox \cite{rat} and provide experimental results to demonstrate effectiveness in \Cref{sec:benchmarks}. Finally, we draw conclusions in \Cref{sec:conclusion}.

\section{Preliminaries}
\label{sec:preliminaries}

\paragraph{Quantifier-free non-linear real algebra (\qfnra)}
Let $\N$, $\NN$, $\QQ$ and $\RR$ denote the set of all natural, natural excluding zero, rational resp. real numbers. For $s \in \RR$, its \emph{sign} $\sgn(s)$ is $1$ if $s>0$, $0$ if $s=0$, and $-1$ otherwise. For a set $S$, we define $\mathcal{P}(S)=\{S'\,|\,S'\subseteq S\}$.

Assume for the rest of the paper $n\in\NN$ and some statically ordered \emph{variables} $x_1 {\prec} \dots {\prec} x_n$. Let $i\in\NN$ with $i\leq n$. By $\QQ[x_1, \dots, x_i]$ we denote the set of all \emph{polynomials} with variables $x_1, \dots, x_i$ and rational coefficients. The \emph{variety} of $p\in\QQ[x_1,\ldots,x_i]$ is the set of its \emph{real zeros} or \emph{roots} $\{s\in\RR^i\,|\,p(s)=0\}$.
A \emph{(\qfnra) constraint} has the form $p \sim 0$ with $p \in \QQ[x_1, \dots, x_n]$ and $\sim\, \in \{<,\leq,=,\neq,\geq,>\}$; $p \sim 0$ is \emph{strict} iff $\sim\;\in\{<,>,\neq\}$.
A \emph{(\qfnra) formula} $\varphi$ is a Boolean combination of constraints. By $\varphi(x_1,\ldots,x_i)$ we express that $\QQ[x_1,\ldots,x_i]$ includes all polynomials appearing in $\varphi$.
For $(s_1,\ldots,s_i) \in \RR^i$,
by $\varphi(s_1,\ldots,s_i,x_{i+1},\ldots,x_n)$ we denote the formula $\varphi[s_1/x_1]\ldots[s_i/x_i]$ after substituting $s_j$ for $x_j$ for $j=1,\ldots,i$.

A \emph{cell} is a non-empty, connected subset of $\RR^i$ for some $1\leq i\leq n$.
Let $S\subseteq\RR^i$ be a cell. 
Given $\varepsilon \in \RR$, $\varepsilon>0$ and $s \in \RR^i$, the \emph{$\varepsilon$-ball around $s$} is the set $\ball{\varepsilon}{s} := \{s'\in\RR^i \mid |s-s'|\leq \epsilon\}$.
$S$ is \emph{open} iff for all $s\in S$ there is an $\epsilon\in\RR$, $\epsilon>0$ such that $\ball{\varepsilon}{s}\subseteq S$. $S$ is \emph{closed} iff $\RR^i\setminus S$ is open.
The \emph{closure} $\closure(S)$ of $S$ is its smallest closed superset, its \emph{interior} $\interior{S}$ is its largest open subset, and its \emph{boundary} $\boundary{S}$ is its closure without its interior.

$S$ is \emph{semi-algebraic} if it is the solution set of a \qfnra formula.
Assuming $j\in\NN$ with $1\leq j<i$, the \emph{projection} of $S$ to $\RR^j$ is $\project{S}{j}=\{(s_1,\ldots,s_j)\,|\,(s_1,\ldots,s_i)\in S\}$; note that if $S$ is semi-algebraic then $\project{S}{j}$ is semi-algebraic. 

$S$ is \emph{sign-invariant} for $P\subseteq\QQ[x_1, \dots, x_i]$ if $\sgn(p(s))=\sgn(p(s'))$ for all $p\in P$ and all $s,s'\in S$.
$S$ is \emph{truth-invariant} (respectively \emph{UNSAT}) for a formula $\varphi(x_1,\ldots,x_i)$ if $\varphi(s)$ and $\varphi(s')$ simplify to the same truth value (respectively to \False) for all $s,s'\in S$.
For $P \subseteq \QQ[x_1,\ldots,x_i]$ and $s \in \RR^i$, by $S(P,s)$ we denote the maximal cell $S \subseteq \RR^i$ that is sign-invariant for $P$ and contains $s$.

Cell properties are generalized to sets of cells by requiring the respective property for each cell in the set. E.g., a cell set $D\subseteq\mathcal{P}(\RR^i)$ is semi-algebraic if each $S\in D$ is semi-algebraic. We also extend the projection to sets of cells $D\subseteq\mathcal{P}(\RR^i)$ as $\project{D}{j}=\{\project{S}{j}\,|\, S\in D\}$.

\paragraph{Cylindrical algebraic decomposition (CAD)}
\label{sec:cad}

The \emph{CAD method} \cite{collins} was introduced by Collins in 1975. Despite its doubly exponential complexity, all complete algorithms in SMT solvers for \qfnra  are based on the techniques underlying the CAD. This also holds true for the CAlC method.
For a formula $\varphi(x_1,\ldots,x_n)$, the CAD method decomposes $\RR^n$ into a finite set $D$ of cells that are sign-invariant for the polynomials in $\varphi$ and thus truth-invariant for $\varphi$, such that we can decide $\varphi$'s satisfiability by evaluating it at one sample point from each $S \in D$.


\begin{definition}[Cylindrical algebraic decomposition]
    Assume $i \in \NN$ with $i\leq n$.
    \begin{itemize}
        \item A \emph{decomposition} of $\RR^i$ is a finite set $D\subset\mathcal{P}(\RR^i)$  such that $\cup_{S \in D} S = \RR^i$ and either $S=S'$ or $S \cap S' = \emptyset$ for every $S,S' \in D$.
        \item A decomposition $D$ of $\RR^i$ is \emph{algebraic} if each $S\in D$ is a semi-algebraic cell.
        \item A decomposition $D_i$ of $\RR^i$ is \emph{cylindrical} if either $i=1$, or $i>1$ and $D_{i-1}=\{\project{S}{i-1}\mid S\in D\}$ is a cylindrical decomposition of $\RR^{i-1}$.
    \end{itemize}
\end{definition}

For a finite set of polynomials $P \subset \QQ[x_1,\ldots,x_n]$, the CAD method computes CADs $D_1, \ldots, D_n$ of $\RR^1,\ldots,\RR^n$ with $\project{D_i}{i{-}1}=D_{i-1}$ for $i=2,\ldots,n$, in two phases:\\ (1) The \emph{projection phase} computes for $i=n,\ldots,1$, in this order, finite sets of polynomials $P_i\subset \QQ[x_1,\ldots,x_i]$ such that the varieties of $\cup_{j=1}^i P_j$ define the boundaries of the cells in $D_i$.
Starting from $P_n=P$, for $i=n-1,\ldots,1$ we obtain $P_{i}$ by a \emph{projection operator} $\proj_i: \mathcal{P}(\QQ[x_1,\ldots,x_{i+1}]) \to \mathcal{P}(\QQ[x_1,\ldots,x_{i}])$ that eliminates the highest variable from its input polynomials. We do not define $\proj_i$ here, it suffices to assume that it maintains some properties such that the resulting CAD is sign-invariant for $P$.
\\
(2) In the lifting phase, compute the cells in the CADs $D_1, \ldots, D_n$, in this order. Instead of representing them explicitly, usually a sample point for every cell is generated. For $i\in\{1,\ldots,n\}$ and $s\in\RR^{i-1}$ (with $\RR^{0}=\{()\}$), let $\Xi_{s}=\{\xi_{s,1},\ldots,\xi_{s,k_s}\}$ be the ordered set of all real roots of the univariate polynomials from $\{p(s,x_i)\,|\,p\in P_i\}$. Let furthermore $I_s=\{(-\infty,\infty)\}$ if $\Xi_s=\emptyset$, and otherwise let $I_s$ consist of all point-sets  $\{\xi_{s,j}\}$, $j=1,\ldots,k_{s}$ (which cover the \emph{sections}, i.e. the cells consisting of roots of $P_i$), and the intervals $(-\infty,\xi_1),(\xi_1,\xi_2),\ldots,(\xi_{k_{s}-1},\xi_{k_{s}}),(\xi_{k_{s}},\infty)$ (which cover the \emph{sectors}, i.e. the open cells between the sections).
Assume furthermore a fixed but arbitrary function $\sample$ that assigns to each non-empty interval a value from this interval.
For $i=1,\ldots,n$, based on $S_{i-1}$ with $S_0=\{()\}$, we iteratively define the sample sets for $D_i$ as $S_i=\{(s,s_i)\,|\,s\in S_{i-1}\wedge I\in I_s\wedge s_i=\sample(I_s)\}$. 

\begin{example}
    \label{ex:cad}

    Consider $P=P_2 = \{ p_1: -x_1^2-x_2+1,\, p_2: x_1^2-x_2-1,\,  p_3: (x_1-0.5)^2+(x_2+1.5)^2-0.25,\, p_4: x_1+0.5 \}$ with projection $P_1 = \proj_2(P_2)$. \Cref{fig:ex-cad-calc} depicts the varieties of the polynomials from $P_2$, and below it those from $P_1$. The intervals defined by the real zeros from $P_1$ form a sign-invariant CAD $D_1$ for $P_1$. The sign-invariant CAD $D_2$ for $P_2$ is cylindrical over $D_1$, i.e. the cells of $D_2$ are arranged in cylinders over $D_1$. In $D_1$, the sections are the cells containing only a real root from $P_1$, the open intervals are the sectors. In $D_2$, we extend the varieties with the cylinder boundaries induced by $D_1$, and define $D_2$'s sections as the intersection points of lines as well as the line segments bounded by them, and the sectors as the open cells bounded by the sections.
    \Square
\end{example}

We note that not only the computation of the projection is computationally expensive, but also the number of samples grows exponentially with the number of variables. Although lifting over rational samples (that involves plugging in the sample point to obtain univariate polynomials and isolating their real roots) is rather efficient, in general, expensive lifting over non-rational \emph{real-algebraic} samples (which requires similar machinery as the projection) cannot be avoided. 

\begin{figure}[t]

    \begin{subfigure}[t]{0.49\textwidth}
        \centering
        \includegraphics{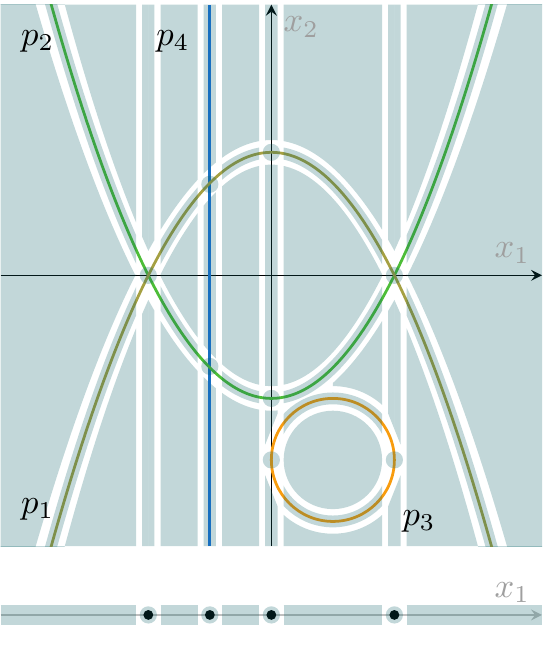}

        %
    \end{subfigure}\hfill
    \begin{subfigure}[t]{0.49\textwidth}
        \centering
        \includegraphics{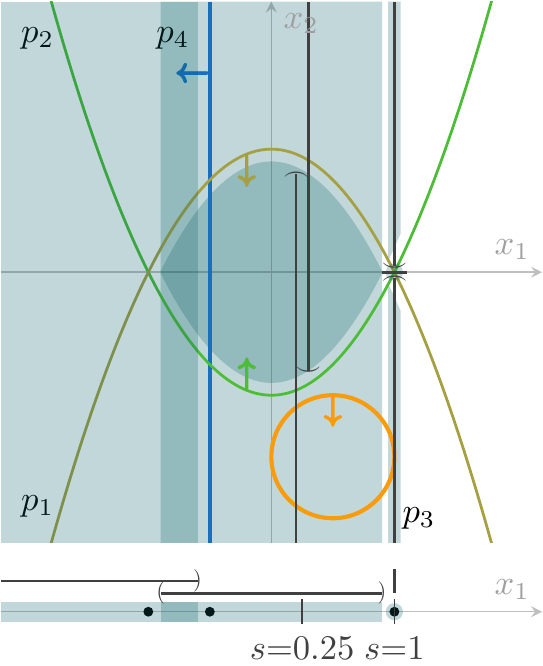}

    \end{subfigure}
    \caption{Examples for CAD (left) and CAlC (right). \emph{Left:} Varieties of $P_2$ (top) and $P_1$ (bottom) from Example \ref{ex:cad}. Cells of the sign-invariant CADs are coloured blue. \emph{Right:} Varieties of $P_2$ (top) and $P_1$ (bottom) from Example \ref{ex:calc}. Inwards-pointing arrows mark unsatisfiable areas for $\varphi$'s constraints (distinguished by different colours). UNSAT-intervals generated during the CAlC computations are shown as bracket-bounded thick black lines (open intervals) and short line segments (point intervals). UNSAT-cells for $\varphi$ are coloured blue. The one-dimensional samples $s=0.25$ and $s=1$ are shown as thin vertical lines.   }    
        \label{fig:ex-cad-calc}
\end{figure}

\paragraph{Cylindrical algebraic covering} \label{sec:calc}

CADs are finer than what we actually need for checking the satisfiability of \qfnra formulas, as we either need to find a satisfying sample, or \emph{cover} $\RR^n$ with UNSAT cells. \Cref{ex:cad} illustrates that cells which are UNSAT for a constraint are often split into several smaller cells due to sign changes of some other polynomials. To avoid such splits, the \emph{cylindrical algebraic covering (CAlC)} \cite{abraham} method, which uses the techniques from the CAD, relaxes sign-invariance for truth-invariance, still keeping the cylindrical arrangement of UNSAT cells but allowing overlaps between them.
Here we give a brief overview on the CAlC method for \emph{conjunctions} of constraints, and refer to \cite{abraham,qecovering} for the general case.

\begin{definition}[Cylindrical algebraic covering]
    Assume $i \in \NN$ with $i\leq n$.
    \begin{itemize}
        \item A \emph{covering} of $\RR^i$ is a finite set $C\subseteq\mathcal{P}(\RR^i)$ such that $\cup_{S \in C} S = \RR^i$.
        \item A covering $C$ of $\RR^i$ is \emph{algebraic} if each $S\in C$ is a semi-algebraic cell.
        \item A covering $C$ of $\RR^i$ is \emph{cylindrical} if either $i=1$, or $i>1$ and $C_{i-1}=\{\project{S}{i-1}\mid S\in C_i\}$ is a cylindrical covering of $\RR^{i-1}$.
    \end{itemize}
\end{definition}

The CAlC method is \emph{sample-guided}: in contrast to CAD, it does not start with projection but with a dimension-wise guess of values for a sample, with the aim to satisfy the input formula $\varphi$. For a current partial sample $(s_1,\ldots,s_{i-1})\in\RR^{i-1}$ with $1\leq i\leq n$, we iteratively identify intervals $I\subseteq \RR$ such that for any $s_i\in I$, $\varphi(s_1,\ldots,s_i)$ is unsatisfiable. We continue this process until either we find a solution (along with a partial UNSAT covering) or the intervals cover the whole $\RR$. In the latter case, if $i=1$ then the problem is unsatisfiable and a complete CAlC is returned; otherwise, we backtrack one level to sampling for $x_{i-1}$, generalize $s_{i-1}$ to an unsatisfying interval, and try to guess another value for $x_{i-1}$ outside the already excluded intervals. The interval generalization $I$ of $s_{i-1}$ is computed using real root isolation and a reduced version of the CAD projection. Intuitively,  it contains $s_{i-1}$ and other points $s_{i-1}'$ for which $\varphi(s_1,\ldots,s_{i-2},s_{i-1}')$ is unsatisfiable for same reason as $\varphi(s_1,\ldots,s_{i-2},s_{i-1})$.

In general, CAlCs require less computational effort than CADs: the sample-based search in CAlC saves parts of the projection executed in CAD, as well as the lifting effort corresponding to them.

\begin{example}
    \label{ex:calc}

    Re-using $P = \{ p_1: -x_1^2-x_2+1,\, p_2: x_1^2-x_2-1,\,  p_3: (x_1-0.5)^2+(x_2+1.5)^2-0.25,\, p_4: x_1+0.5 \}$ from \Cref{ex:cad}, we consider the formula $\varphi := p_1<0 \wedge p_2>0 \wedge p_3\geq0 \wedge p_4\geq0$. \Cref{fig:ex-cad-calc} illustrates on the right the following CAlC computations.

    Samples $x_1\in (-\infty, -0.5)$ violate $p_4\geq 0$. Outside $(-\infty, -0.5)$ we pick $s_1=0.25$ for $x_1$, and consider $\varphi(0.25,x_2)$.
    The constraint $p_1(0.25,x_2)<0$ is unsatisfiable for $x_2\in(-\infty,\frac{15}{16})$, and $p_2(0.25,x_2)>0$ is unsatisfiable for $x_2\in(-\frac{15}{16},\infty)$, together covering the real line.
    We generalize the UNSAT result to the cell $(-1,1)$ containing $s_1=0.25$ in the $D_1$ CAD for $\{p_1,p_2\}$ (see also \Cref{ex:strict}).

    Outside $(-\infty, -0.5)\cup(-1,1)$ we pick $s_1=1$ for $x_1$, and compute a covering of unsatisfying intervals for $\varphi(1,x_2)$ as depicted in the two-dimensional coordinate system. This time the sample $s_1=1$ is a real root, which cannot be generalized further than the section $[1,1]$ for $x_1$.

    Next, we pick a satisfying sample with $s_1=1.5$ and $s_2=0$, and the algorithm terminates.

    This example shows where the CAlC method is more efficient than the CAD: Firstly, single constraints like $p_4\geq0$ above can be used to rule out parts of the search space requiring fewer projection and lifting steps. Secondly, the CAlC method does not necessarily involve all polynomials in projection and corresponding lifting steps if they are redundant, as e.g. $p_3\geq0$ above excludes part of the search space that is already ruled out by the other constraints. 
    \Square
\end{example}

Each UNSAT interval $I \subseteq \RR$ generated during the CAlC computations over some partial sample $s \in \RR^{i-1}$ corresponds to an UNSAT cell $S \subseteq \RR^i$ with $\{s\}\times I=(\{s\} \times \RR) \cap S$. This UNSAT cell is represented \emph{implicitly} by a set of polynomials $P \subset \QQ[x_1,\ldots,x_i]$ and the sample $(s,s_i)$ for some $s_i\in I$ such that, except for some special cases, $S$ is the maximal sign-invariant cell for $P$ that contains $(s,s_i)$, i.e. $S = S(P,(s,s_i))$.

For generalizing a covering of intervals $I_1,\ldots,I_k \subseteq \RR$ (i.e. $\cup_{j=1}^k I_j = \RR$) over some sample point $s \in \RR^{i-1}$, the CAlC method defines a partial projection operator that projects the implicitly represented cells to an $(i-1)$-dimensional cell containing $s$ as follows:

\begin{definition}
    The \emph{covering projection operator} $\proj_{\textit{cov}}$ is a function which, for any $i,k\in\NN$, $P_1,\ldots,P_k \subset \QQ[x_1,\ldots,x_i]$, $s \in \RR^{i-1}$ and $s'_1,\ldots,s'_k \in \RR$ with $s \times \RR \subseteq \cup_{j=1}^k S(P_j,(s,s'_j))$ as input, returns a set of polynomials $P=\proj_{\textit{cov}}(P_1,\ldots,P_k,s,s'_1,\ldots,s'_k)\subseteq\QQ[x_1,\ldots,x_{i-1}]$ with the property that $S(P,s) \times \RR \subseteq \cup_{j=1}^k S(P_j,(s,s'_j))$.
\end{definition}

The correctness of an UNSAT covering is given by the following theorem:

\begin{theorem}
    \label{thm:calc}
    Let $\varphi$ be a formula in variables $x_1,\ldots,x_n$, $i>1$, $P_1,\ldots,P_k \subseteq \QQ[x_1,\ldots,x_i]$, $s \in \RR^{i-1}$ and $s'_1,\ldots,s'_k \in \RR$ such that $s \times \RR \subseteq \cup_{j=1}^k S(P_j,(s,s'_j))$ and for all $j=1,\ldots,k$ the cell $S(P_j,(s,s'_j))$ is UNSAT for $\varphi$.
    
    Then $S(\proj_{\textit{cov}}(P_1,\ldots,P_k,s,s'_1,\ldots,s'_k),s)$ is UNSAT for $\varphi$.
\end{theorem}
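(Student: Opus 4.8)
The plan is to reduce the statement to the defining contract of the covering projection operator and then finish with an elementary pointwise argument. Write $P := \proj_{\textit{cov}}(P_1,\ldots,P_k,s,s'_1,\ldots,s'_k)$ and $S := S(P,s) \subseteq \RR^{i-1}$. The hypothesis $s \times \RR \subseteq \bigcup_{j=1}^k S(P_j,(s,s'_j))$ is exactly the admissibility condition on the inputs of $\proj_{\textit{cov}}$, so its specification directly yields
\[
  S \times \RR \;=\; S(P,s)\times\RR \;\subseteq\; \bigcup_{j=1}^k S(P_j,(s,s'_j)).
\]
This inclusion is the only structural property of $\proj_{\textit{cov}}$ that the proof uses; everything else is bookkeeping with the notions of cell and the UNSAT property.

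Next I would argue pointwise. Fix an arbitrary $u \in S$ and an arbitrary $t \in \RR$. By the inclusion above, $(u,t) \in S(P_j,(s,s'_j))$ for some $j \in \{1,\ldots,k\}$. Since $S(P_j,(s,s'_j)) \subseteq \RR^i$ is UNSAT for $\varphi$ by hypothesis and $(u,t)$ lies in it, the formula $\varphi(u,t,x_{i+1},\ldots,x_n)$ obtained by substituting $(u,t)$ for $(x_1,\ldots,x_i)$ is unsatisfiable. As $t$ was arbitrary, $\varphi(u,t,x_{i+1},\ldots,x_n)$ is unsatisfiable for every $t \in \RR$; hence the formula $\varphi(u,x_i,\ldots,x_n)$ with $x_i$ left free is itself unsatisfiable, since any satisfying assignment to it would in particular assign some real value to $x_i$. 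Finally, $u \in S = S(P,s)$ was arbitrary, so $S(P,s)$ is UNSAT for $\varphi$, which is the claim.

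I do not expect a real obstacle: the theorem is essentially a restatement of the contract of $\proj_{\textit{cov}}$ together with the observation that unsatisfiability of all fibers over a cell $S\subseteq\RR^{i-1}$ amounts to $S$ being UNSAT. The two places that deserve a line of care are (i) spelling out the reading of ``UNSAT for $\varphi$'' for a cell of dimension $i<n$ --- namely that substituting any of its points for $x_1,\ldots,x_i$ yields an unsatisfiable formula in $x_{i+1},\ldots,x_n$, which for $i=n$ reduces to the definition from \Cref{sec:preliminaries} --- and (ii) noting that $S(P,s)$ is a well-defined cell, i.e. that a unique maximal connected sign-invariant set for $P$ through $s$ exists, which holds because the union of two such sets sharing $s$ is again connected and sign-invariant.
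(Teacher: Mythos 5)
Your proof is correct; the paper states \Cref{thm:calc} without giving a proof (it is inherited from the original CAlC work), and your argument is exactly the intended one: the inclusion $S(P,s)\times\RR\subseteq\bigcup_{j=1}^k S(P_j,(s,s'_j))$ is precisely the defining contract of $\proj_{\textit{cov}}$, and the rest is the fiberwise unfolding of what it means for a cell of dimension $i-1<n$ to be UNSAT for $\varphi$. Your two points of care --- the reading of ``UNSAT'' for lower-dimensional cells and the well-definedness of $S(P,s)$ as the maximal connected sign-invariant set through $s$ --- are the right things to make explicit and do not hide any gap.
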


\section{Exploiting the Strictness of Constraints}
\label{sec:main}

Strict constraints $p<0$ are never satisfied at the real roots of $p$. Therefore, when checking the satisfiability of a formula $\varphi$ that contains only strict constraints (and no negations), sections can be neglected in CAD and CAlC computations, i.e. real roots can be omitted during sample construction.  This observation is exploited e.g. in \cite{brown}.

In this work we propose an approach that allows to omit real roots during sample construction in certain cases even if the input formula contains also weak constraints. We start with illustrating the idea on an example.

\begin{example}
  \label{ex:strict}
  In \Cref{ex:calc}, for each $s_1\in\RR$, both $p_1(s_1,x_2)$ and $p_2(s_1,x_2)$ have exactly one real root, which we denote by $\xi_1(s_1)$ respectively $\xi_2(s_1)$.

  For the sample $s_1=0.25$ for $x_1$, we covered $x_2$ by the intervals $(-\infty,\frac{15}{16})$ violating $p_1<0$, and  $(-\frac{15}{16},\infty)$ violating $p_2>0$. 
  The above intervals represent the cells $\{(s_1,s_2)\in\RR^ 2\,|\, s_2<\xi_1(s_1)\}$ and $\{(s_1,s_2)\in\RR^ 2\,|\, s_2>\xi_2(s_1)\}$. The generalization of $s_1=0.25$ is the maximal interval $(-1,1)$ over which the above cells still build a covering (see \Cref{fig:ex-cad-calc}).
  
  Note that $s_2=\xi_1(s_1)$ implies $p_1=0$, and $s_2=\xi_2(s_1)$ implies $p_2=0$. Thus, the \emph{closures} $\{(s_1,s_2)\in\RR^ 2\,|\, s_2\leq\xi_1(s_1)\}$ and $\{(s_1,s_2)\in\RR^ 2\,|\, s_2\geq\xi_2(s_1)\}$ are still unsatisfying for $p_1<0$ resp. $p_2>0$. Now, since both cells are closed, they cover $x_2$ over the closed generalization $[-1,1]$, the covering intervals at $s_1=\pm 1$ being $(-\infty,0]$ and $[0,\infty)$, which makes it unnecessary to consider the sample $s_1=1$ from \Cref{ex:calc}.
        \Square
\end{example}

Clearly, if we are able to deduce closed intervals, the computed CAlC consists not only of fewer cells, but we can also avoid computationally intensive lifting operations over potentially non-rational algebraic numbers.
The following example demonstrates that for two-dimensional formulas we could go even further.

\begin{example}
    \label{ex:strict-potential}
    Assume in the previous example that the first constraint would be non-strict, then for the constraints $p_1(0.25,x_2) \leq 0$ and $p_2(0.25,x_2)>0$ we would achieve the covering $(-\infty,\frac{15}{16})$ and $[-\frac{15}{16},\infty)$ for $x_2$. Now, only the second cell is closed. However, this still suffices to cover $[-1,1]$ also at its endpoints $s_1=\pm 1$ by $(-\infty,0)$ and $[0,\infty)$.
    \Square
\end{example}

\begin{figure}[t]
  \begin{subfigure}[b]{0.3\textwidth}
    \includegraphics[width=0.98\textwidth]{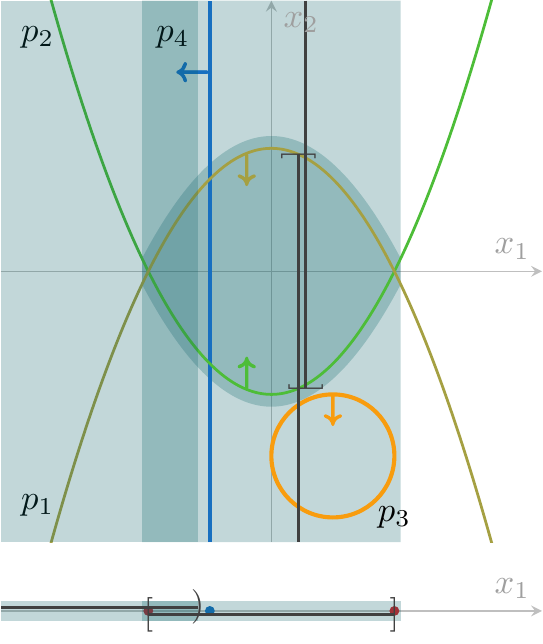}
    \caption{Closed cells from strict constraints.}
    \label{fig:ex-calc-strict}
  \end{subfigure}
  \hfill
  \begin{subfigure}[b]{0.3\textwidth}
    \includegraphics[width=0.98\textwidth]{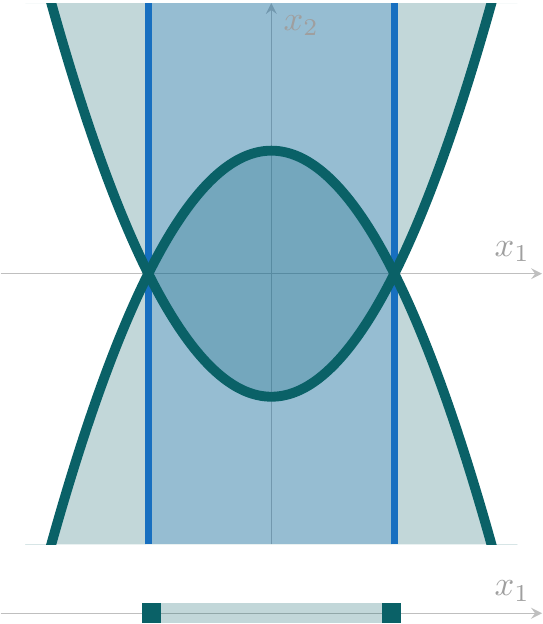}
    \caption{Closed generalizations from coverings of closed cells.}
    \label{fig:proof}
  \end{subfigure}
  \hfill
  \begin{subfigure}[b]{0.3\textwidth}
    \includegraphics[width=0.98\textwidth]{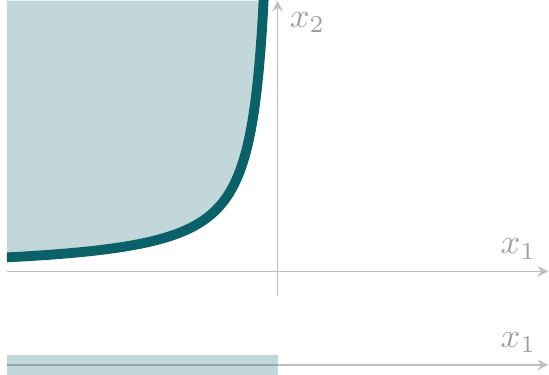}
    \caption{A closed set with open projection.}
    \label{fig:openproj}
  \end{subfigure}
  \caption{Projections of closed cells.}
\end{figure}%
However, generalizing this observation to constraint sets with more than two variables is non-trivial; we therefore focus on the case where \emph{all} intervals of a covering are closed.

To that end, we change our view from UNSAT intervals to the UNSAT cells they represent. During the CAlC computations, some UNSAT cells violate a constraint (as indicated by the arrows in \Cref{fig:ex-calc-strict}); as illustrated in \Cref{ex:strict}, we can \emph{close} those cells $S$ that violate \emph{strict} constraints $p\sim 0$ (i.e. $p(s)\not\sim 0$ for all $s\in S$) without losing the UNSAT property of the cell (i.e. $p(s)\not\sim 0$ for all $s\in \closure(S)$).

Furthermore, for any covering by UNSAT intervals which represent the $i$-dimensional \emph{closed} cells $S_1,\ldots,S_k \subseteq \RR^i$, if $S\subseteq \RR^{i-1}$ is a possible generalization of the current sample (i.e. $S \times \RR \subseteq \cup_{j=1}^k S_j$), then also the closure of $S$ is a valid generalization ($\closure(S) \times \RR \subseteq \cup_{j=1}^k S_j$). This fact is formalized in the following theorem and  visualized in \Cref{fig:proof}.


\begin{theorem} \label{thm:intervals}
    Assume $i,k\in\NN$, closed cells $S_1, \dots, S_k \subseteq \RR^i$, and a cell $S \subseteq \RR^{i-1}$ such that $S \times \RR \subseteq \cup_{j=1}^k S_j$. Then $\closure({S}) \times \RR \subseteq \cup_{j=1}^k S_j$.
\end{theorem}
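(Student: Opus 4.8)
The plan is to reduce this $i$-dimensional statement about cylinders to a purely topological fact about \emph{finite} unions of closed sets in $\RR^{i-1}$, applied one slice at a time. First I would fix an arbitrary $t \in \RR$ and consider the slice map $\iota_t : \RR^{i-1} \to \RR^i$, $u \mapsto (u,t)$, which is continuous. For each $j = 1,\dots,k$ set $A_j^t := \iota_t^{-1}(S_j) = \{u \in \RR^{i-1} \mid (u,t) \in S_j\}$; since $S_j$ is closed and $\iota_t$ is continuous, each $A_j^t$ is closed in $\RR^{i-1}$. The hypothesis $S \times \RR \subseteq \cup_{j=1}^k S_j$ then says precisely that $S \subseteq \cup_{j=1}^k A_j^t$ for every $t$.

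Next I would invoke that a finite union of closed sets is closed, so $\cup_{j=1}^k A_j^t$ equals its own closure. Since the closure operator is monotone, $S \subseteq \cup_{j=1}^k A_j^t$ yields $\closure(S) \subseteq \closure\!\left(\cup_{j=1}^k A_j^t\right) = \cup_{j=1}^k A_j^t$. Unwinding the definition of $A_j^t$, this means that for every $\bar s \in \closure(S)$ we have $(\bar s, t) \in \cup_{j=1}^k S_j$. As $t \in \RR$ was arbitrary, $\closure(S) \times \RR \subseteq \cup_{j=1}^k S_j$, which is the claim. (Equivalently one can argue with sequences: given $\bar s \in \closure(S)$ and $t \in \RR$, take $s_m \to \bar s$ with $s_m \in S$, so $(s_m,t) \in \cup_j S_j$; by the pigeonhole principle some fixed $S_{j_0}$ contains an infinite subsequence $(s_{m_\ell},t) \to (\bar s,t)$, and closedness of $S_{j_0}$ gives $(\bar s,t) \in S_{j_0}$.)

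I do not expect a serious obstacle here: the mathematical content is essentially ``a finite union of closed sets is closed'', and the only thing to get right is the reduction to slices, so that the ambient dimension drops by one and we are genuinely reasoning about $\closure(S)$ inside $\RR^{i-1}$ rather than about $\closure(S\times\RR)$ inside $\RR^i$ (which would be a different, trivial statement). The one hypothesis that must be used honestly is the \textbf{finiteness of $k$}: an infinite union of closed sets need not be closed (for instance $(0,1) = \bigcup_{m} [\tfrac1m, 1-\tfrac1m]$), so the statement genuinely fails without it. The cell structure of the $S_j$ and $S$ (connectedness, semi-algebraicity) plays no role and can be ignored.
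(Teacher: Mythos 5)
Your proof is correct, and it reaches the same underlying fact as the paper — that a \emph{finite} union of closed sets is closed — but by a different route. The paper argues by contradiction directly in $\RR^i$: it takes a hypothetical uncovered point $s\in\boundary{S}\times\RR$, uses openness of the complements $\RR^i\setminus S_j$ to get a ball of radius $\min_j \varepsilon_j$ around $s$ disjoint from every $S_j$, and then contradicts the hypothesis by finding a point of $S\times\RR$ inside that ball (using $\boundary{S}\times\RR=\boundary{S\times\RR}$). You instead slice: for each fixed $t\in\RR$ the sets $A_j^t=\{u\mid (u,t)\in S_j\}$ are closed in $\RR^{i-1}$, their finite union is closed, and monotonicity of closure gives $\closure(S)\subseteq\cup_j A_j^t$ for every $t$. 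Your version is arguably cleaner because it outsources the $\varepsilon$-juggling to the standard fact and never needs the identity $\boundary{S}\times\RR=\boundary{S\times\RR}$, which the paper uses without comment; the paper's version has the merit of staying entirely in the ambient space $\RR^i$, which matches how the cells are actually represented in the algorithm. Both arguments correctly isolate finiteness of $k$ as the essential hypothesis, and your counterexample for infinite unions is apt. One cosmetic remark: your parenthetical claim that the statement with $\closure(S\times\RR)$ in place of $\closure(S)\times\RR$ would be ``a different, trivial statement'' is slightly off — these two sets are in fact equal, so the distinction you are drawing there is vacuous rather than substantive; but this does not affect the proof.
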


\begin{proof}
  Assume for contradiction that there exists an $s \in \boundary{S} \times \RR\subseteq\RR^i$ such that $s \notin \cup_{j=1}^k S_j$. Then $s \notin S_j$, i.e. $s\in \RR^i\setminus S_j$ for all $j \in \{1,\ldots,k\}$. As the sets $S_j$ are closed, their complements $\RR^i \setminus S_j$ are open. By definition of an open cell, for each $j \in \{1,\ldots,k\}$ there exist $\varepsilon_j>0$ such that $\ball{\varepsilon_j}{s}\subseteq \RR^i \setminus S_j$. Let $\varepsilon$ be the smallest such $\varepsilon_j$ under all $j\in\{1,\ldots,k\}$. Then $\ball{\varepsilon}{s}\subseteq \RR^i \setminus S_j$ for all $j \in \{1,\ldots,k\}$.
    As $s \in \boundary{S} \times \RR=\boundary{S\times\RR}$, by the definition of $\ball{\varepsilon}{s}$ there exist an $s' \in \ball{\varepsilon}{s} \cap (S \times \RR)$, in other words, $s' \notin S_j$ for all $j \in {1,\ldots,k}$, which is a contradiction to the assumption that $S \times \RR \subseteq \cup_{j=1}^k S_j$.
\end{proof}

The above theorem might seem straight-forward at the first sight, but there are some special cases that make it less trivial, e.g. that the projection of a closed cell might not be closed as shown in \Cref{fig:openproj}. 

Now, we apply the general \Cref{thm:intervals} from above to \Cref{thm:calc} to obtain a variant that supports the derivation of closed cells.

\begin{theorem}
    \label{thm:calc-ext}

    Let $\varphi$ be a formula in variables $x_1\ldots,x_n$, $i>1$, $P_1,\ldots,P_k \subseteq \QQ[x_1,\ldots,x_i]$, $s \in \RR^{i-1}$ and  $s'_1,\ldots,s'_k \in \RR$ such that $s \times \RR \subseteq \cup_{j=1}^k \closure(S(P_j,(s,s'_j)))$ and for all $j=1,\ldots,k$ the cell  $\closure(S(P_j,(s,s'_j)))$ is UNSAT for $\varphi$.
    
    Then $\closure(S(\proj_{cov}(P_1,\ldots,P_k,s,s'_1,\ldots,s'_k),s))$ is UNSAT for $\varphi$.
\end{theorem}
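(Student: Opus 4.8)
The plan is to derive \Cref{thm:calc-ext} from \Cref{thm:calc} and \Cref{thm:intervals} by re-running the short argument that underlies \Cref{thm:calc} with every cell $S(P_j,(s,s'_j))$ replaced by its closure, and by splicing in one application of \Cref{thm:intervals} to pass from coverage of the fibre over $s$ to coverage of the closure of the projected cell. Throughout I would write $S_j := S(P_j,(s,s'_j))$ and $P := \proj_{cov}(P_1,\ldots,P_k,s,s'_1,\ldots,s'_k)$, and I would use the convention (already implicit in \Cref{thm:calc}) that for a cell $T\subseteq\RR^m$ with $m<n$, ``$T$ is UNSAT for $\varphi$'' means $\varphi(t_1,\ldots,t_m,x_{m+1},\ldots,x_n)$ is unsatisfiable for every $t\in T$.

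First I would note that each $\closure(S_j)$ is again a closed cell, since the closure of a non-empty connected set is non-empty, connected and closed; hence \Cref{thm:intervals} is applicable to the family $\closure(S_1),\dots,\closure(S_k)\subseteq\RR^i$. Second, the defining property of the covering projection operator gives $S(P,s)\times\RR \subseteq \cup_{j=1}^k S_j \subseteq \cup_{j=1}^k \closure(S_j)$. Third, applying \Cref{thm:intervals} with the closed cells $\closure(S_1),\dots,\closure(S_k)$ and the cell $S(P,s)\subseteq\RR^{i-1}$ upgrades this to $\closure(S(P,s))\times\RR \subseteq \cup_{j=1}^k \closure(S_j)$. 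Fourth, I would conclude by contradiction exactly as in \Cref{thm:calc}: if $\closure(S(P,s))$ were not UNSAT for $\varphi$, then some $t\in\closure(S(P,s))$ together with values $(u_i,\ldots,u_n)$ would make $\varphi(t,u_i,\ldots,u_n)$ true; but then $(t,u_i)\in\closure(S(P,s))\times\RR\subseteq\cup_j\closure(S_j)$, so $(t,u_i)\in\closure(S_{j_0})$ for some $j_0$, and since $\closure(S_{j_0})$ is UNSAT for $\varphi$ by hypothesis the formula $\varphi(t,u_i,x_{i+1},\ldots,x_n)$ is unsatisfiable, contradicting that $(u_{i+1},\ldots,u_n)$ satisfies it. Therefore $\closure(S(P,s))$ is UNSAT for $\varphi$.

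The genuinely delicate point I expect is feeding an admissible input to $\proj_{cov}$: its definition requires that $s\times\RR$ be covered by the cells $S(P_j,(s,s'_j))$ \emph{themselves}, whereas the hypothesis of \Cref{thm:calc-ext} only supplies coverage by their \emph{closures}. I would handle this either by observing that, over the one-dimensional fibre, a CAlC covering is assembled from sector and section intervals and therefore already covers $s\times\RR$ with the $S_j$ before any cell is closed, or by working with the corresponding extension of $\proj_{cov}$ to coverings by closed cells, whose output guarantee reads $S(P,s)\times\RR\subseteq\cup_j\closure(S(P_j,(s,s'_j)))$ directly; after that the rest of the plan is unchanged. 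A second thing to guard against is the tempting shortcut of claiming that $S(P,s)$ is itself closed because it is a ``projection of closed cells'': as \Cref{fig:openproj} shows, projections of closed sets need not be closed, which is precisely why \Cref{thm:intervals} must be stated and proved on fibres rather than by commuting closure with projection. Everything else — that closures of cells are cells, that UNSAT descends to subcells, and the final contradiction — is routine.
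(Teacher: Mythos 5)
There is a genuine gap, and it sits exactly at the point you flag as ``delicate'' but then do not actually close. Your second step asserts that ``the defining property of the covering projection operator gives $S(P,s)\times\RR \subseteq \cup_{j=1}^k S_j$.'' Under the hypotheses of \Cref{thm:calc-ext} this is not available: the operator $\proj_{cov}$ is only specified on inputs for which $s\times\RR\subseteq\cup_{j=1}^k S(P_j,(s,s'_j))$, whereas the theorem only assumes coverage by the \emph{closures}. In the very situation the theorem is designed for, the open cells $S_j$ do \emph{not} cover the fibre --- the whole point of the optimization is that the section points are never sampled because they are already absorbed by the closures of the neighbouring sectors (cf.\ \Cref{ex:strict}, where at $s_1=\pm1$ the fibre is covered by $(-\infty,0]$ and $[0,\infty)$ but not by the corresponding open cells). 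Your fix~(a), that ``a CAlC covering is assembled from sector and section intervals and therefore already covers $s\times\RR$ with the $S_j$,'' describes the \emph{unmodified} algorithm and is false for the inputs of \Cref{thm:calc-ext}; your fix~(b), postulating an extension of $\proj_{cov}$ whose output guarantee reads $S(P,s)\times\RR\subseteq\cup_j\closure(S(P_j,(s,s'_j)))$, assumes precisely the property that has to be established.

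The paper's proof sketch confronts this head-on: it augments the family with additional cells $S(P_{k+1},(s,s'_{k+1})),\ldots,S(P_l,(s,s'_l))$, each contained in the boundary of some $S(P_j,(s,s'_j))$ with $P_{j'}=P_j$, so that $\cup_{j=1}^l S(P_j,(s,s'_j))=\cup_{j=1}^k\closure(S(P_j,(s,s'_j)))$ is a genuine covering of the fibre; it then applies \Cref{thm:calc} to the enlarged family, applies \Cref{thm:intervals} exactly as you do, and finally argues that $\proj_{cov}(P_1,\ldots,P_l,\ldots)=\proj_{cov}(P_1,\ldots,P_k,\ldots)$ because the added cells reuse the same defining polynomial sets and only describe boundaries of already-present cells, so the ``skeleton'' of the projection is unchanged. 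That last equality is the real content of the theorem (and even the paper only justifies it informally, since the details of $\proj_{cov}$ are omitted); a complete argument must either reproduce this augmentation-plus-invariance step or open up the definition of the projection operator. Your remaining steps --- closures of cells are closed cells, the application of \Cref{thm:intervals} on fibres rather than commuting closure with projection, and the final descent of UNSAT --- are correct and match the paper.
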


\begin{proof}[Proof sketch]
    Let $l \geq k$, $P_{k+1},\ldots,P_l \subseteq \QQ[x_1,\ldots,x_i]$, $s'_{k+1},\ldots,s'_l \in \RR$ such that $s \times \RR \subseteq \cup_{j=1}^l S(P_j,(s,s'_j))$ and for every $j'=k+1,\ldots,l$ there exists a $j \in \{1,\ldots,k \}$ such that $P_{j'}=P_j$ and $(s,s'_{j'}) \in \boundary{S(P_j,(s,s'_j))}$, that means $S(P_{j'},(s,s'_{j'})) \subseteq \boundary{S(P_j,(s,s'_j))}$.

    Then by \Cref{thm:calc} it holds for $P := \proj_{cov}(P_1,\ldots,P_l,s,s'_1,\ldots,s'_l)$ that $S(P,s) \times \RR \subseteq \cup_{j=1}^l S(P_{j},(s,s'_{j})) = \cup_{j=1}^k \closure(S(P_{j},(s,s'_{j})))$.

    We now apply \Cref{thm:intervals} to $S(P,s)$ and $\closure(S(P_{j},(s,s'_{j})))$, $j=1,\ldots,k$ and obtain $\closure(S(P,s)) \times \RR \subseteq \cup_{j=1}^k \closure(S(P_{j},(s,s'_{j})))$.

    For the theorem, it remains to show that $P = \proj_{cov}(P_1,\ldots,P_k, s, s'_1,\ldots,s'_k)$.
    A formal proof would require the definition of the details of the projection operator $\proj_{cov}$, which we had to omit due to space restrictions. At this point, we just state without proving that adding the additional cells does not change the projection. We justify that as each additional cell $S(P_{j'},s'_{j'}),\ j=k+1,\ldots,l$ describes the boundary of a neighbouring open cell $S(P_j,s'_j)$ for some $j \in \{ 1,\ldots,k \}$ and the defining polynomial sets $P_j=P_{j'}$ are equal, and thus the `skeleton' of the covering does not change.
\end{proof}

\medskip

This result yields a simple adaption of the CAlC algorithm: Along with each implicit cell representation, we store a Boolean flag that indicates whether the corresponding unsatisfiable cell is closed or not. If the flag is set, the corresponding cell is closed, thus we can set the bounds of the witnessing interval to closed. Whenever we compute the base cell of a covering consisting of closed cells, we can easily deduce the flag for the new cell (it is true whenever the parents' flags are all true).
Thus, any implementation of the CAlC implementation can be adapted for the theorem by only superficial changes in the code.

\ifarxiv
We provide a 3D example for the covering method and our adaption in \Cref{sec:3dexample}.
\else
We provide a 3D example for the covering method and our adaption in the appendix of ??.
\fi

\section{Experimental results} \label{sec:benchmarks}

We implemented the proposed method to exploit strict constraints in our SMT-RAT \cite{rat} solver, using standard preprocessing and DPLL(T) solving with an implementation of the CAlC method

\begin{minipage}{\linewidth}
    \centering
    \hspace*{-0.55cm}
\begin{minipage}{0.4\textwidth}
    \centering
    \setkomafont{caption}{\sffamily\small}
    \setkomafont{captionlabel}{\usekomafont{caption}}
    \captionsetup{labelfont=bf,labelsep=newline,format=plain}
        \captionof{table}{Number of solved instances on the whole benchmark set.\\}
        \label{fig:results-overview}
        \begin{tabular}{m{1.2cm}m{1cm}m{1cm}}
            \toprule
            \small{Solver} & \small{SAT}    & \small{UNSAT}   \\ \midrule
            \calc  & $4553$ &  $4625$  \\
            \calci & $4610$ & $4648$   \\
            \calcih & $4609$ &  $4648$   \\ 
            \midrule
            Total & $5069$ & $5379$ \\
            & \multicolumn{2}{l}{($1104$ unknown)} \\ \bottomrule
        \end{tabular}
\end{minipage}%
\hspace*{0.5cm}%
\begin{minipage}{0.5\textwidth}
    \centering
        \setkomafont{caption}{\sffamily\small}
        \setkomafont{captionlabel}{\usekomafont{caption}}
        \captionsetup{labelfont=bf,labelsep=colon,format=plain}
        \includegraphics[width=0.99\textwidth]{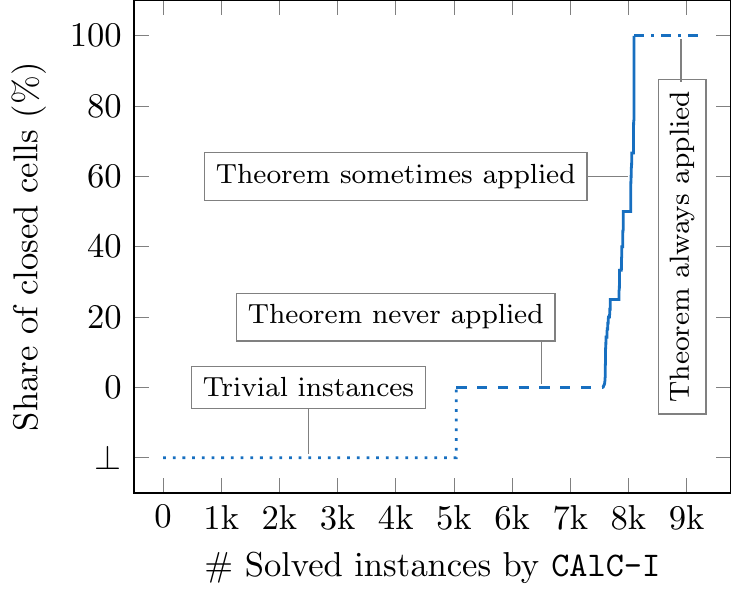}
        \captionof{figure}{Number of instances solved by \calci{} and their (maximal) ratio of flagged/closed cells.\\}
        \label{fig:theorem-applicable}
\end{minipage}
\end{minipage}

\noindent as the only theory solver. The implementation is accessible at \url{https://doi.org/10.5281/zenodo.7900518}. We execute the solver on the \textit{QF\_NRA} benchmark library from \texttt{SMT-LIB} \cite{smtlib} (as of April 2022), consisting of $11552$ instances that stem from $11$ different families. Each formula is solved on a CPU with $2.1$ GHz with a timeout of $60$ seconds and a memory limit of $4$ GB.
In the following, we denote the original CAlC solver by \calc{} and the modification that maintains interval flags to indicate closed cells by \calci{}.


\Cref{fig:results-overview} shows the overall performance of \calc{} and \calci.
The modified method \calci{} solves $80$ instances more than \calc{}, whereby more than two thirds of this gain is on SAT instances, the remaining on UNSAT instances.
\calc{} times out on $77$ instances which are solved by \calci{}. Conversely, \calci{} times out on $7$ instances which are solved by \calc{}. 

\paragraph{Applications of the theorem}

\Cref{fig:theorem-applicable} depicts the share of the \emph{derived} implicit cell representations that carry a \True{} flag (i.e. \Cref{thm:calc-ext} was applicable during its creation).
More than half of the instances are already solved by the SAT solver (the theory solver is never called) or the call to the CAlC contains only univariate polynomials. The theorem is never applied on $2518$ instances. The theorem is applied at least once in $1699$ instances. The theorem is always applied on $1162$ instances (i.e. all cells are closed).

We focus the further analysis on the interesting instances where the theorem was applied at least once: the instances solved by both solvers and the theorem was applied at least once ($1623$ instances), instances solved only by \calc{} ($8$ instances), and instances solved only by \calci{} ($88$ instances).
\begin{figure}[t]
    \centering
    \hspace*{-0.6cm}
    \begin{subfigure}[c]{0.47\textwidth}
        \centering
        
        \includegraphics[width=0.99\textwidth]{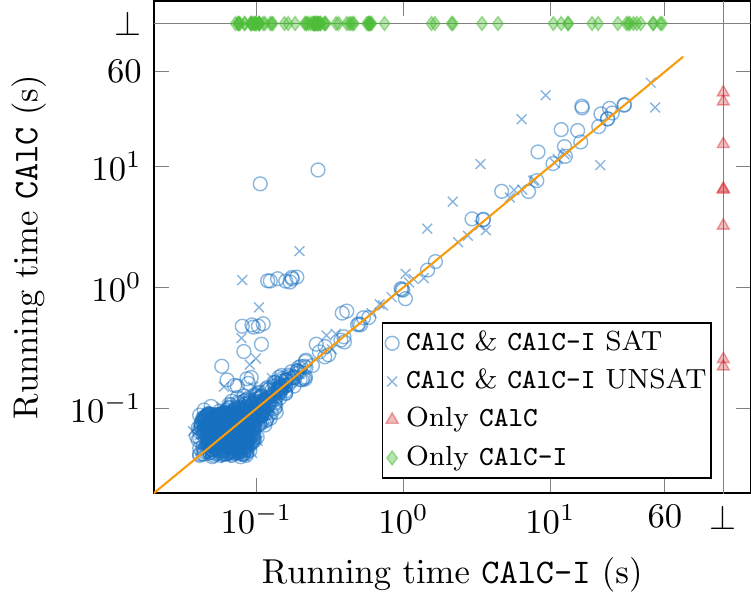}
        \caption{Running times of \calc{} and \calci{}.}
        \label{fig:calci-runtime}
    \end{subfigure}
    \hspace*{0.4cm}
    \begin{subfigure}[c]{0.47\textwidth}
        \centering
        
        \includegraphics[width=0.99\textwidth]{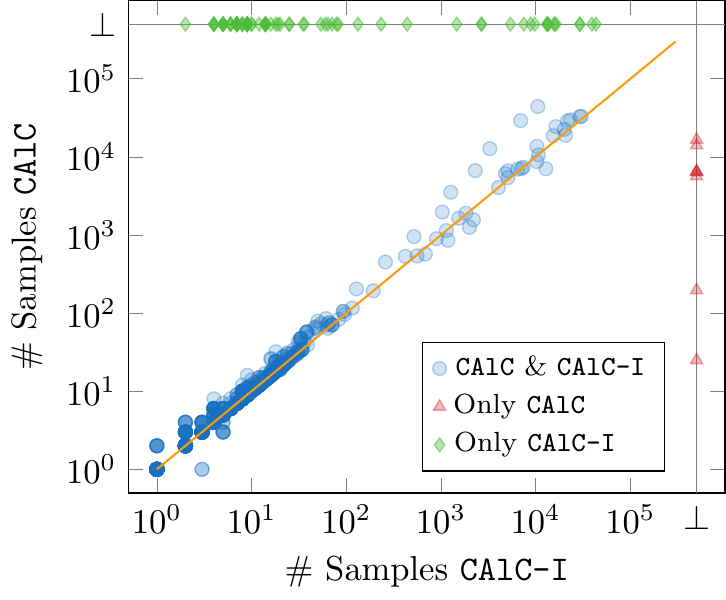}
        \caption{Samples of \calc{} and \calci{}}
        \label{fig:calci-samples}
    \end{subfigure}
    \caption{Scatter plots for running time and number of samples. $\bot$ denotes a timeout on an instance.}
\end{figure}%
\paragraph{Running times}

\Cref{fig:calci-runtime} compares the running times of \calc{} and \calci{} on the interesting instances. 
The bottom left cluster of instances is computationally easy; small deviations in running time between \calc{} and \calci{} are negligible. Most instances with higher running time are located near the equality line, i.e. \calc{} and \calci{} are equally efficient on them.
There are $30$ instances solved faster by more than $0.1$ seconds on \calc{} than on \calci{} (including instances solved only by \calc{}, see area below the equality line). The other way around, $197$ instances are solved faster by \calci{} (see area above the equality line). We conclude that \calci{} can significantly improve the running time on certain instances.

In particular, some UNSAT instances (blue crosses) deviate from the equality line for running times greater than one second. \calci{} is able to cover the entire space with fewer UNSAT cells than \calc{}; this effect is less significant on the SAT instances, where only a partial covering is computed.

\paragraph{Number of samples}

We expect that the advantage of the proposed optimization is due to the fact that the application of \Cref{thm:calc-ext} reduces the number of samples constructed in the CAlC method.

To evaluate this correlation, \Cref{fig:calci-samples} compares the number of partial sample points of \calc{} and \calci{}. Though the number of samples is often similar, \calci{} tends to generate fewer samples especially on the larger instances.

\paragraph{Iterative applications of the theorem}

As \Cref{thm:calc-ext} can only be applied if \emph{all} cells forming the covering are closed, the question raises how often it can be applied iteratively. To that end, we say that the cells forming a covering are the \emph{parents} of the covering's base cell. In that sense, we define for every cell its \emph{depth} as the distance to its `oldest' ancestor. 
\Cref{fig:calci-depth} plots for every instance the maximal depth among all cells versus the relative maximal depth among all closed cells.

We clearly see that the theorem is mostly applicable to cells with low depth. In particular, the $1308$ instances at the top left of the plot, where at least one closed cell has maximal depth, have a total maximal depth below $10$. Further, the instances where the theorem is only applied to cells with depth one form a visible hyperbola, which are 1159 of the 1711 depicted instances. We conclude that the performance gains stem mostly from `superficial' application of the theorem.
Still, there are some instances above this hyperbola, representing non-trivial applications of our theorem.

\begin{figure}[t]
    \centering
    \hspace*{-0.6cm}
    \begin{subfigure}[c]{0.47\textwidth}
        \centering
        
        \includegraphics[width=0.99\textwidth]{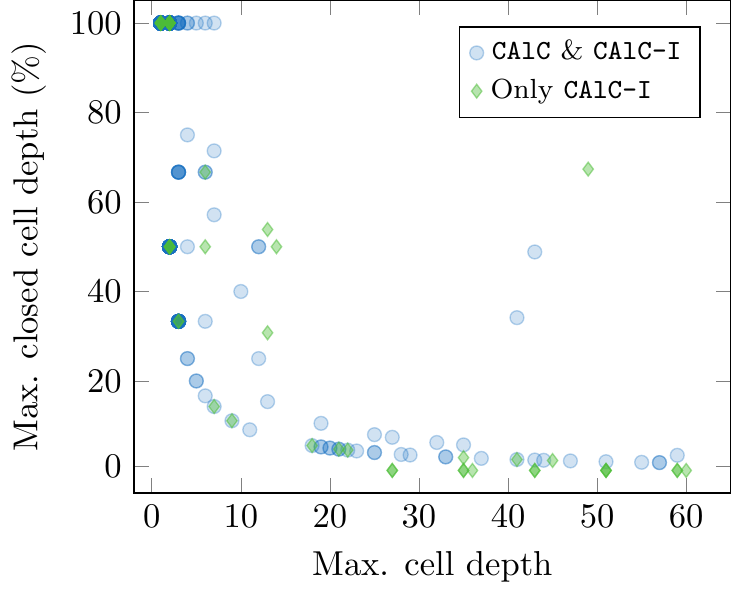}
        \caption{Cell depth ratio of \calci{}}
        \label{fig:calci-depth}
    \end{subfigure}
    \hspace*{0.4cm}
    \begin{subfigure}[c]{0.47\textwidth}
        \centering
        
        \includegraphics[width=0.99\textwidth]{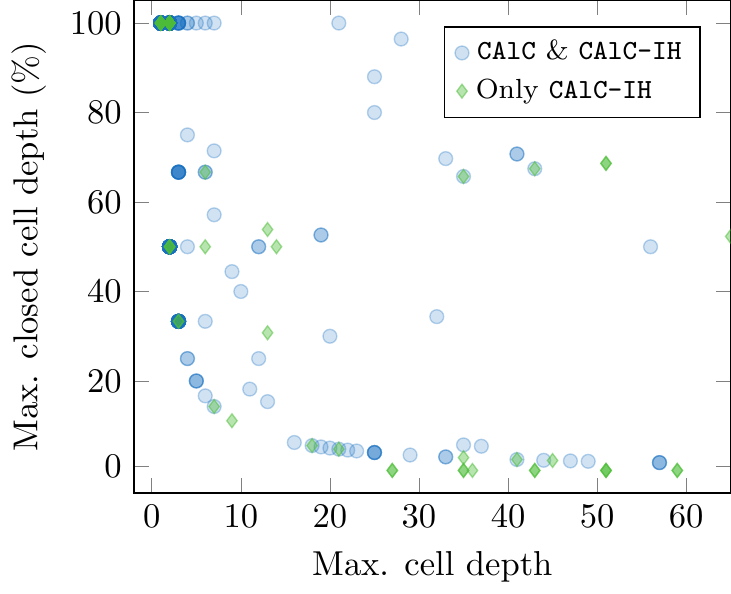}
        \caption{Cell depth ratio of \calcih{}}
        \label{fig:calcih-depth}
    \end{subfigure}
    \caption{Maximal cell depth and the relative maximal closed cell depth.}
\end{figure}%

\paragraph{Modification of the covering heuristic}

The modification \calci{} is clearly more efficient than \calc{}; as we just observed, these gains are due to `superficial' application of \Cref{thm:calc-ext}. We now aim to adapt the \calci{} method to support the application of the theorem also to cells of higher depth.
Whenever the CAlC method finds a covering of cells at some sample, the choice of the cells forming the covering is not unique; redundancies in the formula might allow for a choice of the cells. \calci{} heuristically minimizes the number of cells forming a covering. We propose an adaption \calcih{} which first tries to cover using closed cells only, and falls back to the default heuristic if it fails.  

\Cref{fig:calcih-depth} shows that the theorem is now applied on more instances with higher depth. However, the running times do not improve significantly (not shown here); \Cref{fig:results-overview} shows that even one instance less is solved. Thus, more sophisticated heuristics are desirable which find a better trade off between a covering of `good' intervals and supporting the theorem applicability.

\section{Conclusion} \label{sec:conclusion}

The cylindrical algebraic covering method admits reducing the number of projection and lifting operations compared to the cylindrical algebraic decomposition switching from being sign-invariant for a set of polynomials to being truth-invariant for an input formula. In this paper, we propose a natural extension to the CAlC method that exploits strict constraints in the input formula: If a strict constraint is unsatisfiable in some cell, then it is so in the closure of that cell. Our adaption allows to carry this information through the CAlC algorithm to avoid lifting over roots of polynomials, which is desirable as this is usually computationally expensive.

The proposed adaption is easy and efficient to implement. Our experimental evaluation concluded that (1) we gain a good portion of newly solved instances, (2) these gains are due to reduced number of lifting steps, and (3) our modification is still `superficial' and leaves potential for future investigation into the topic.

Future work consists of advanced theoretical work as motivated by \Cref{ex:strict-potential}, and better heuristics for choosing good coverings as motivated in the last paragraph of \Cref{sec:benchmarks}.

\begin{acknowledgments}
  Jasper Nalbach was supported by the DFG RTG 2236/2 \textit{UnRAVeL}.
  We thank James Davenport and Matthew England for fruitful discussions.
\end{acknowledgments}

\bibliography{bib}

\ifarxiv
\newpage
\appendix
\section{3D example}
\label{sec:3dexample}

We give a more elaborated 3D example to illustrate the main principles of the CAlC method and the proposed modification.

\paragraph{Constraint set}
Let $\varphi$ be the conjunction of constraints
\begin{displaymath}
    \varphi := c_1:z-y^2-x^2 \geq 0 \wedge c_2:z^2+y^2+x^2-3 < 0 \wedge c_3:z^2+(y-2.5)^2+x^2-16 > 0
\end{displaymath}

\noindent whose polynomial zeros are depicted in \Cref{fig:caclw}. We refer to the defining polynomial of $c_j$ as $p_j$. The zero $p_1=0$ is illustrated in blue, $p_2=0$ in orange, and $p_3=0$ in green. The constraint $c_1$ is satisfied inside the paraboloid and on its surface, $c_2$ inside the smaller sphere, and $c_3$ outside the larger one. The zeros of $p_2$ and $p_3$, i.e. the surfaces, are not satisfying. That way, every pair of constraints is satisfiable, but all together are not. A 3D-illustration is given at \url{https://doi.org/10.5281/zenodo.6738566}. 

\begin{figure}[h]
    \centering
    \begin{subfigure}[b]{0.5\textwidth}
    \includegraphics[width=1\textwidth]{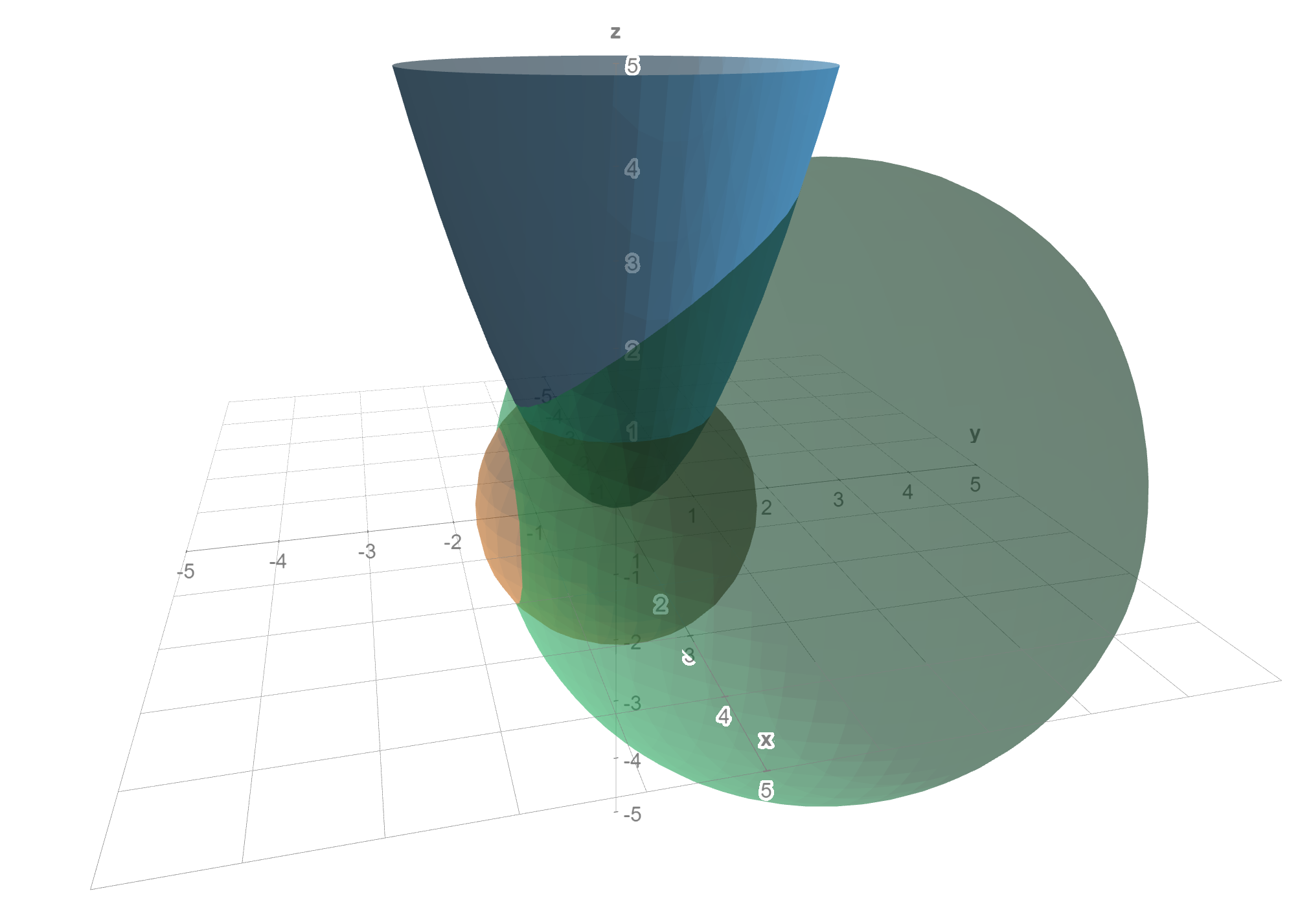}
    \caption{Conjunction of constraints $\varphi$}
    \label{fig:caclw}
    \end{subfigure}
    \hfill
    \begin{subfigure}[b]{0.45\textwidth}
        \centering

        \begin{tikzpicture} 
            \begin{axis}[axis on top=false,
                axis line style=thick, 
                axis x line=center, 
                axis y line=center, 
                ymin=-2.2,ymax=2.2,xmin=-2.2,xmax=2.2, 
                xlabel=$x$, ylabel=$y$,grid=none,
                xtick={-1,0,1},
                ytick={-1,0,1},
                yticklabels={-1,0,1},
                xticklabels={-1,0,1},
                y label style={at={(axis cs:-0.4,2.2)}},
                x=1.25cm,y=1.25cm
            ] 
                \begin{scope}
                    \clip[] (axis cs: -1.09,-1.35) rectangle (axis cs: 1.09,2);
                    \fill[color=rwth-teal, opacity=0.15] (axis cs:0,0) circle [radius=1.73205*1.25cm];
                \end{scope}
    
                \draw[thick, rwth-orange] (axis cs:0,0) circle [radius=1.73205*1.25cm];
                
                \node[label={$p_4$}] at (axis cs:-1.4,1.1) {};
    
                \addplot [domain = -5:5,
                        smooth,
                        samples=501,
                        thick,
                        color=rwth-green,
                        ]
                        {sqrt(16-x^2)+2.5};
                \addplot [domain = -5:5,
                        smooth,
                        samples=501,
                        thick,
                        color=rwth-green,
                        ]
                        {2.5-sqrt(16-x^2)};
                
                \node[label={$p_5$}] at (axis cs:-1.9,-1.05) {};
    
                \addplot [domain = -5:5,
                        smooth,
                        samples=200,
                        thick,
                        color=rwth-yellow!60!black,
                        ]
                        {-1.35};
                
                \node[label={$p_6$}] at (axis cs:-1.9,-1.9) {};
                
                \fill[pattern={mylines[size=2pt,line width=1.5pt,angle=-45]}, pattern color=gray, opacity=0.6] (axis cs:-0.1,-1.35) rectangle (axis cs:0.1,1.73);
                \fill[pattern={mylines[size=2pt,line width=1.5pt,angle=45]}, pattern color=gray, opacity=0.6] (axis cs:-0.1,1.73) rectangle (axis cs:0.1,2.2);
                \fill[pattern={mylines[size=2pt,line width=1.5pt,angle=45]}, pattern color=gray, opacity=0.6] (axis cs:-0.1,-1.73) rectangle (axis cs:0.1,-1.14);
                \fill[pattern={mylines[size=2pt,line width=1.5pt,angle=-45]}, pattern color=gray, opacity=0.6] (axis cs:-0.1,-2.2) rectangle (axis cs:0.1,-1.73);
    
                \node[color=darkgray] at (axis cs:0,1.73-0.05) {\rotatebox{90}{$\pmb{)}$}};
                \node[color=darkgray] at (axis cs:0,-1.35) {\rotatebox{90}{$\pmb{(}$}};
                \node[color=darkgray] at (axis cs:0,1.73+0.07) {\rotatebox{90}{$\pmb{(}$}};
                \node[color=darkgray] at (axis cs:0,-1.14) {\rotatebox{90}{$\pmb{)}$}};
                \node[color=darkgray] at (axis cs:0,-1.73+0.07) {\rotatebox{90}{$\pmb{(}$}};
                \node[color=darkgray] at (axis cs:0,-1.73-0.05) {\rotatebox{90}{$\pmb{)}$}};
                \draw[color=darkgray,thick] (axis cs:-0.2,1.73) -- (axis cs:0.2,1.73);
                \draw[color=darkgray,thick] (axis cs:-0.2,-1.73) -- (axis cs:0.2,-1.73);

                \node[color=darkgray] at (axis cs:0.3,0.5) {$I_1$};
                \node[color=darkgray] at (axis cs:0.3,2) {$I_4$};
                \node[color=black,fill=white,fill opacity=0.5,inner sep=0pt,outer sep=0pt] at (axis cs:0.3,-1.5) {$I_2$};
                \node[color=darkgray] at (axis cs:0.3,-2) {$I_3$};
                \node[color=darkgray] (name5) at (axis cs:0.5,1.3) {$I_6$};
                \node[color=darkgray] (name6) at (axis cs:-0.6,-1.9) {$I_5$};

                \draw[] (axis cs:0.415,1.35) -- (axis cs:0.21,1.73);
                \draw[] (axis cs:-0.55,-1.85) -- (axis cs:-0.21,-1.73);

            \end{axis} 
            \end{tikzpicture} 

        \caption{Full covering of the $y$-axis}
        \label{fig:ex-cell}
    \end{subfigure}
    \caption{\footnotesize{\emph{Left:} Illustration of the polynomial zeros from the defining polynomials of $\varphi$. \emph{Right:} A full covering of the $y$-axis which is achieved after six sampling steps. Every cross-hatched bar represents an open interval. The two horizontal black lines correspond to the sections $I_5$ and $I_6$. The coloured area belongs to the cell induced by the interval $I_1$.}}
\end{figure}%

\subsection{Original CAlC method}

\paragraph{Initial assignment}
The CAlC method starts with the empty sample $s=()$. Before making a choice for $s_1$, every constraint in $\varphi$ is substituted by the current partial sample $s$ and the univariate ones are analysed. However, without assigning any variable, no constraint becomes univariate, i.e. no UNSAT intervals can be drawn from the constraints. Therefore, any real number is suitable for being assigned to $x$. For simplicity, we choose $0$ for $x$. The updated sample $s=(0)$ has not yet full dimension, so we continue recursively and set $y$ to $0$, obtaining $s=(0,0)$.

\paragraph{UNSAT intervals from constraints}
This time, substitution $\varphi(0,0,z)$ of $s$ produces univariate constraints that yield the UNSAT intervals given in \Cref{tab:ex1}. We use `$\approx{\,\,\cdot\,\,}$' to indicate rounded numbers.
\begin{table}[t]
    \centering
    \begin{minipage}{0.99\textwidth}
        \centering
        \captionof{table}{Partially evaluated constraints and the intervals of unsatisfaction.}
        \label{tab:ex1}
        \begin{tabular}{ccc}
        \toprule
        Constraint & Evaluated & Unsatisfied Intervals \\ \midrule
        $c_1$ & $z \geq 0$      & $(-\infty,0)$ \\
        $c_2$ & $z^2 < 3$      & $(-\infty,-\approx{1.73}), [-\approx{1.73},-\approx{1.73}], [\approx{1.73},\approx{1.73}], (\approx{1.73},\infty$) \\
        $c_3$ & $z^2 > 9.75$      & $[-\approx{3.12},-\approx{3.12}], (-\approx{3.12},\approx{3.12}), [\approx{3.12},\approx{3.12}].$ \\ \bottomrule
        \end{tabular}
    \end{minipage}
\end{table}
Alongside with each UNSAT interval $I$ at $s$, the CAlC algorithm stores a set of polynomials and a sample $(s,s')$ where $s' \in I$ to implicitly represent the corresponding UNSAT cell that is witnessed by $I$. E.g. alongside the interval $(-\infty,0)$ from $c_1$ we would also store $\{ p_1 \}$ and $(0,0,-1)$. A more technical view on this example is provided in \cite{ba}.

\paragraph{Projection}
The UNSAT intervals obtained from $c_2$ and $c_3$ cover the real line. Hence, no satisfying extension of $s$ onto the $z$-dimension is possible. Not all intervals from $c_2$ and $c_3$ are needed for a full covering, so a non-redundant selection is made, e.g. we can choose the set $\I := \{ (-\infty,-\approx{1.73}), (-\approx{3.12},\approx{3.12}), (\approx{1.73},\infty) \}$.

Now the method tracks back to the $y$-dimension and projection polynomials characterizing the underlying cell of $\I$ is calculated. The characterization consists of resultants, discriminants and coefficients; we do not discuss the details here. In this case, we obtain the projection polynomials 
\begin{align*}
    p_4 &:= \disc(p_2) = y^2 + x^2 - 3,\\
    p_5 &:= \disc(p_3) = y^2 - 5y + x^2 - 9.75,\\
    p_6 &:= \res(p_2,p_3) = y + 1.35.
\end{align*}%

\paragraph{Interval deduction for $\mathbf{y}$}
We compute an interval for $y$ from the computed projection polynomials; if we would set $y$ to any value within this interval, we would obtain the same conflict (i.e. the same cells would cover all values for $z$) as $(0,0)$.


At first, the roots of $p_4(0,y),p_5(0,y)$, and $p_6(0,y)$ are calculated based on the lower-dimensional sample $s=(0)$. This results in the roots
\begin{displaymath}
    Z = \{ -\infty, \underbrace{-\approx{1.73}}_{p_4}, \underbrace{-1.5\vphantom{\underline{5}}}_{p_5}, \underbrace{-1.35\vphantom{\underline{5}}}_{p_6}, \underbrace{\approx{1.73}}_{p_4}, \underbrace{6.5\vphantom{\underline{5}}}_{p_5}, \infty \}.
\end{displaymath}
The closest zeros around the current assignment $0$ for $y$ are $-1.35$ and $\approx{1.73}$. In between, the original polynomials do not show significant changes in their behaviour, i.e. the number and order of their zeros remains the same. Thus, we obtain the interval $I_1 := (-1.35,\approx{1.73})$ together with the polynomials $\{p_6,p_4\}$ and the sample $(0,0)$ defining the corresponding UNSAT cell. A visualization of the current situation is given in \Cref{fig:ex-cell}. The cyan area belongs to the cell induced by the interval $I_1$. It is bounded by the zeros of $p_4$ and $p_6$. The expansion of the cell in $x$-dimension is limited by the common zeros of $p_4$ and $p_6$. The cylinder over the cell from \phantom{withgen}

\begin{minipage}{\linewidth}
    \centering
    \hspace*{-0.6cm}
    \begin{minipage}{0.55\textwidth}
    \centering
    \setkomafont{caption}{\sffamily\small}
    \setkomafont{captionlabel}{\usekomafont{caption}}
    \captionsetup{labelfont=bf,labelsep=newline,format=plain}
    \captionof{table}{Partially evaluated constraints and the intervals of unsatisfaction for the modified method.\\}
    \label{tab:ex2}
    \begin{tabular}{ccc}
        \toprule
        Constraint & Evaluated & Unsatisfied Intervals \\ \midrule
        $c_1$ & $z \geq 0$      & $(-\infty,0)$ \\
        $c_2$ & $z^2 < 3$      & $(-\infty,-\approx{1.73}],\![\approx{1.73},\infty$) \\
        $c_3$ & $z^2 > 9.75$      & $[-\approx{3.12},\approx{3.12}].$ \\ \bottomrule
    \end{tabular}

    \phantom{W}

    \phantom{w}
    \end{minipage}%
    \hspace*{0.35cm}%
    \begin{minipage}{0.42\textwidth}
    \centering
    \setkomafont{caption}{\sffamily\small}
    \setkomafont{captionlabel}{\usekomafont{caption}}
    \captionsetup{labelfont=bf,labelsep=colon,format=plain}
    \begin{tikzpicture}[sibling distance=10em,
        every node/.style = {align=center,sibling distance=1cm,level distance=1cm},
        level/.style={sibling distance=2.1cm,level distance=1cm},
        level 2/.style={dashed,level distance=0.8cm}]]
        \node {$[-1.35,\approx{1.73}]$}
            child { node {$(-\infty,-\approx{1.73}]$} 
            child { node { $c_2$}}
            }
            child { node {$[-\approx{3.12},\approx{3.12}]$}
            child { node { $c_3$}}
            }
            child { node {$[\approx{1.73},\infty)$}
            child { node { $c_2$}}
            };
    \end{tikzpicture}
    
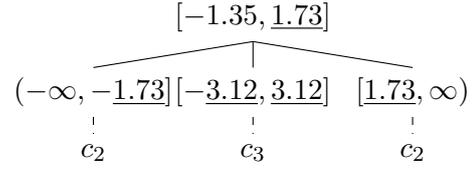
\captionof{figure}{Ancestor constraint tree of the interval $I_1$. Dashed edges do not belong to the tree, but draw the connection to input constraints.}
    \label{fig:ex-tree}
    \end{minipage}
\end{minipage}

\noindent \Cref{fig:ex-cell} is covered by the UNSAT cells induced by $p_2$ and $p_3$. That is, any two-dimensional sample from the cell cannot be extended for a satisfying sample.

\paragraph{Iterative procedure}
$I_1$ does not cover the whole $y$-axis yet, so we continue sampling $s=(0,a)$ for some $a \in \RR \setminus I_1$. When using the sequence $a_1 := 0, a_2 := -1.5, a_3 := -2, a_4 := 2, a_5 := -\approx{1.73}, a_6 := \approx{1.73}$ to choose $y$, we obtain the intervals depicted in \Cref{fig:ex-cell}. The interval $I_j$ is deduced after sampling $a_j$. Now the $y$-axis is covered, and projection steps take place again.

\paragraph{Interval deduction for $\mathbf{x}$}
Based on $I_1, \dots, I_6$ a characterization is computed, i.e. based on intervals that do not stem from constraints directly. In doing so, we deduce the interval $(-\approx{1.09},\approx{1.09})$ for the $x$-dimension.
If one would continue updating $s_1$, a full covering of the $x$-axis can be found. Hence, when backtracking to $s=()$, no value for assigning $x$ is left uncovered and overall UNSAT is returned. The unsatisfiability of the constraint set is detected.

\subsection{Modified CAlC method}
Now we apply the modified CAlC method using \Cref{thm:calc-ext} to $\varphi$. We refer to the cell flags by $\d$.

\paragraph{UNSAT intervals from constraints}
When choosing the sample $s=(0,0)$, all three constraints can be partially evaluated resulting in the UNSAT intervals in \Cref{tab:ex2}.

Note that sections and sectors seem to be merged because open interval bounds can be closed up for strict constraints. All intervals except the first have the flag $\d=\True$ as $c_2$ and $c_3$ are strict.

\paragraph{Interval deduction for $\mathbf{y}$}
The intervals from $c_2$ and $c_3$ form a non-redundant covering for the $z$-dimension and a characterization of the conflict is computed. Again, the projection yields the polynomials $p_4,p_5$, and $p_6$ with the same zeros $Z$.
The closest ones still are $-1.35$ and $\approx{1.73}$. We deduce the interval $I_1 := [-1.35,\approx{1.73}]$ because the tree of coverings above $I_1$ consists of three leaf intervals (two from $c_2$ and one from $c_3$) all with $\d=\True{}$. The conjunction of the three interval flags results in $\d=\True{}$ for $I_1$. The tree is depicted in \Cref{fig:ex-tree}.
The possibility of reassigning $s_2$ is reduced by the values $-1.35$ and $\approx{1.73}$.

If one samples the $y$-axis with $a_1, a_2, a_3$, and $a_4$, a covering of four instead of six intervals is derived: $I_1,I_3$, and $I_4$ have closed finite endpoints, $I_2$ still does not include its bounds because $c_1$ is not strict. The sections $I_5$ and $I_6$ do not appear at all, because there is not need to sample $\pm\approx{1.73}$ as they are already covered by the closed endpoints of $I_1$ and $I_3$.

\paragraph{Interval deduction for $\mathbf{x}$}
Characterization steps are performed yielding the same polynomials $p_4,p_5$, and $p_6$ as before. In particular, the two intervals $I_5$ and $I_6$ from \Cref{fig:ex-cell} are redundant now.
The first interval covering a part of the $x$-axis is $(-\approx{1.09},\approx{1.09})$. Because the flag of $I_2$ is \False{}, the conjunction of the corresponding flags of $I_1,I_2,I_3$, and $I_4$ is \False{}, too.

At this point, the theorem cannot be applied, because the cell induced by $I_2$ does not correspond to its closure.
In case $c_1$ would be strict as well, the deduced interval $I_2$ would be closed up, too. That way, also the interval for $x$ could be closed up.
The example points out that \Cref{thm:calc-ext} makes strong assumptions on the covering. Therefore, the number of closed intervals tends to decrease by every level.

\fi

\end{document}
